\newtheorem{remark}{Remark}
\theoremstyle{approximation}
\newcommand{\sinr}{\mathrm{SINR}}
\newcommand{\sir}{\mathrm{SIR}}
\newcommand{\Eb}{\mathbb{E}}
\newcommand*{\acro}[3][]{\newacronym[#1]{#2}{#2}{#3}}
\begin{document}
\title{On Minimizing Energy Consumption for D2D Clustered Caching Networks}

\author{\IEEEauthorblockN{Ramy Amer\IEEEauthorrefmark{1},			
M. Majid Butt\IEEEauthorrefmark{1}\IEEEauthorrefmark{2},
Hesham ElSawy\IEEEauthorrefmark{3},
Mehdi Bennis\IEEEauthorrefmark{4}, 
Jacek Kibi\l{}da\IEEEauthorrefmark{1},
Nicola Marchetti\IEEEauthorrefmark{1}~\IEEEmembership{Fellow,~IEEE}}
\IEEEauthorblockA{\IEEEauthorrefmark{1}CONNECT, Trinity College, University of Dublin, Ireland}
\IEEEauthorblockA{\IEEEauthorrefmark{2}School of Engineering, University of Glasgow, UK}
\IEEEauthorblockA{\IEEEauthorrefmark{3}King Abdullah University for Science and Technology (KAUST)}
\IEEEauthorblockA{\IEEEauthorrefmark{4}Centre for Wireless Communications, University of Oulu, Finland}

\IEEEauthorblockA{email:\{ramyr, majid.butt, kibildj, nicola.marchetti\}@tcd.ie, bennis@ee.oulu.fi, hesham.elsawy@kaust.edu.sa}}

\maketitle

\begin{abstract}
We formulate and solve the energy minimization problem for a clustered \ac{D2D} network with cache-enabled mobile devices. Devices are distributed according to a \ac{PCP} and are assumed to have a surplus memory which is exploited to proactively cache files  from a library. Devices can retrieve the requested files from their caches, from neighboring devices in their proximity (cluster), or from the base station as a last resort.
We minimize the energy consumption of the proposed network under a random probabilistic caching scheme, where files are independently cached according to a specific probability distribution. A closed form expression for the \ac{D2D} coverage probability is obtained. The energy consumption problem is then formulated as a function of the caching distribution, and the optimal probabilistic caching distribution is obtained.
Results reveal that the proposed caching distribution reduces energy consumption up to 33\% as compared to caching popular files scheme. 

\begin{IEEEkeywords}
\ac{D2D} caching, energy consumption, clustered process.
\end{IEEEkeywords}

\end{abstract}
\section{Introduction}
Caching at mobile devices significantly improves system performance by facilitating  \ac{D2D} communications, which enhances the spectral efficiency and alleviates the heavy burden on backhaul links \cite{Femtocaching}. Modeling the cache-enabled heterogeneous networks, including small cell \ac{BS} and mobile devices, follows two main directions in the literature. The first line of work focuses on the fundamental throughput scaling results by assuming a simple protocol channel model \cite{Femtocaching,golrezaei2014base,amer2017delay,8412262}, where two devices can communicate if they are within certain distance. The second line of work, which is relevant to our work, considers a more realistic model for the underlying physical layer \cite{andreev2015analyzing,cache_schedule}. We review some of the works relevant to the second line, focusing mainly on the \ac{EE} of wireless caching networks.

\ac{EE} in wireless caching networks is widely studied in the literature.
For example, an optimal caching problem is formulated in \cite{energy_efficiency} to minimize energy consumption in a wireless network. The authors consider a cooperative wireless caching network where relay nodes cooperate with the devices to cache the most popular files in order to minimize energy consumption. In \cite{ee_BS}, the authors quantify the effect of caching on the \ac{EE} of wireless access networks. It is shown that \ac{EE} gain from caching is a function of content popularity, backhaul capacity, and the interference level. However, studies in \cite{energy_efficiency,ee_BS} focus on caching the files at the BSs or access points. 

Content placement at the device level in \ac{D2D} clustered networks is a viable approach to improve the network performance \cite{clustered_twc,clustered_tcom,8070464}.
The authors in \cite{clustered_twc} developed a stochastic geometry based model to characterize the performance of cluster-centric content placement in a \ac{D2D} network. 
The authors characterize the optimal number of \ac{D2D} transmitters that must be simultaneously activated in each cluster to maximize the area spectral efficiency. 
The performance of cluster-centric content placement is characterized in \cite{clustered_tcom}, where the content of interest
in each cluster is cached closer to the cluster center, such that the collective performance of all the devices in each cluster is
optimized. Inspired by the Matern hard-core point process that captures pairwise interactions between nodes, the authors in \cite{8070464}  devised a novel spatially correlated caching strategy called \ac{HCP} such that the \ac{D2D} devices caching the same content are never closer to each other than the exclusion radius. However, the problem of \ac{EE} for \ac{D2D} based caching is not yet addressed in the literature.


In this paper, we aim at minimizing the energy consumption of a clustered cache-enabled \ac{D2D} network, where devices have unused memory to cache files of interest. Devices are assumed to cache files according to a random probabilistic caching scheme. We consider a \ac{PCP} where the devices are spatially distributed as groups in clusters. The cluster centers are drawn from a \ac{PPP}, and the cluster members are normally distributed around the centers. We derive the \ac{D2D} coverage probability and calculate the average achievable rate. The energy consumption problem is then formulated and shown to be convex, and the optimal caching distribution is obtained.
Results unveil that the obtained optimal probabilistic caching scheme yields the lowest energy consumption among all conventional benchmark caching schemes.

The rest of this paper is organized as follows. Section II and Section III present the system model and energy consumption problem formulation, respectively. The coverage probability analysis is discussed in Section IV and the energy minimization problem is solved in Section V. Numerical results are then presented in Section VI, and Section VII concludes the paper. 

\section{System Model}

\subsection{System Setup}
We model the location of the mobile devices with a \ac{PCP}  in which the parent points are drawn from a \ac{PPP}  $\Phi_p$ with density $\lambda_p$, and the daughter points are normally scattered with variance $\sigma^2 \in \mathbb{R}^2$ around each parent
point \cite{haenggi2012stochastic}. 
 The parent points and offspring are referred to as cluster centers and cluster members, respectively. The number of cluster members in each cluster is considered constant, and denoted as $n$. It is worth highlighting that the well-known \ac{TCP}, see \cite{clustered_info}, is similar to the considered \ac{PCP} but the number of cluster members follows a Poisson distribution.  
Therefore, our setup can be interpreted as a variant of \ac{TCP}. Note that the choice of such a variant of \ac{TCP} is in fact for ease of analysis. The density function of the location of a cluster member relative to its cluster center, $y \in \mathbb{R}^2$, is
\begin{equation}
f_Y(y) = \frac{1}{2\pi\sigma^2}\textrm{exp}\Big(-\frac{\lVert y\rVert^2}{2\sigma^2}\Big),
\label{pcp}
\end{equation}
where $\lVert .\rVert$ is the Euclidean norm. The intensity function of a cluster is given by $\lambda_c(y) = \frac{n}{2\pi\sigma^2}\textrm{exp}\big(-\frac{\lVert y\rVert^2}{2\sigma^2}\big)$. Therefore, the overall intensity of the process is given by $\lambda = n\lambda_p$. 
We assume that the BSs' distribution follows another \ac{PPP}   $\Phi_{bs}$ with density $\lambda_{bs}$, which is independent of $\Phi_p$. 


\subsection{Content Popularity and Probabilistic Caching Placement}
We assume that each device has a surplus memory of size $M$ designated for caching files. 
The total number of files is $N_f> M$ and the set (library) of content indices is denoted as $\mathcal{F} = \{1, 2, \dots , N_f\}$. These files represent the content catalog that all the devices in a cell may request, which are indexed in a descending order of popularity. The probability that the $i-th$ file is requested follows a Zipf's distribution \cite{breslau1999web} given by,
\begin{equation}
q_i = \frac{ i^{-\beta} }{\sum_{k=1}^{N_f}k^{-\beta}},
\label{zipf}
\end{equation}
where $\beta$ is a parameter that reflects how skewed the popularity distribution is. For example, if $\beta= 0$, the popularity of the files has a uniform distribution. Increasing $\beta$ increases the disparity among the files popularity such that lower indexed files have higher popularity. By definition, $\sum_{i=1}^{N_f}q_i = 1$. 
It is assumed that different clusters may have different interests, and hence, different library content and/or popularity distribution.  
 We use Zipf's distribution to model the popularity of files per cluster.

\ac{D2D} communication is enabled within each cluster to deliver popular content. A probabilistic caching model is considered, where the content is randomly and independently placed in the cache memories of different devices in the same cluster, according to the same distribution. The probability that a generic device stores a particular file $i$ is denoted as $b_i$, $0 \leq b_i \leq 1$ for all $i \in \mathcal{F}$.
To avoid duplicate caching of the same content within the memory of the same device, we follow the caching approach proposed in \cite{blaszczyszyn2015optimal} as illustrated in Fig. \ref{prob_cache_example}.		

\begin{figure}[t]
\centering
\includegraphics[width=0.3\textwidth]{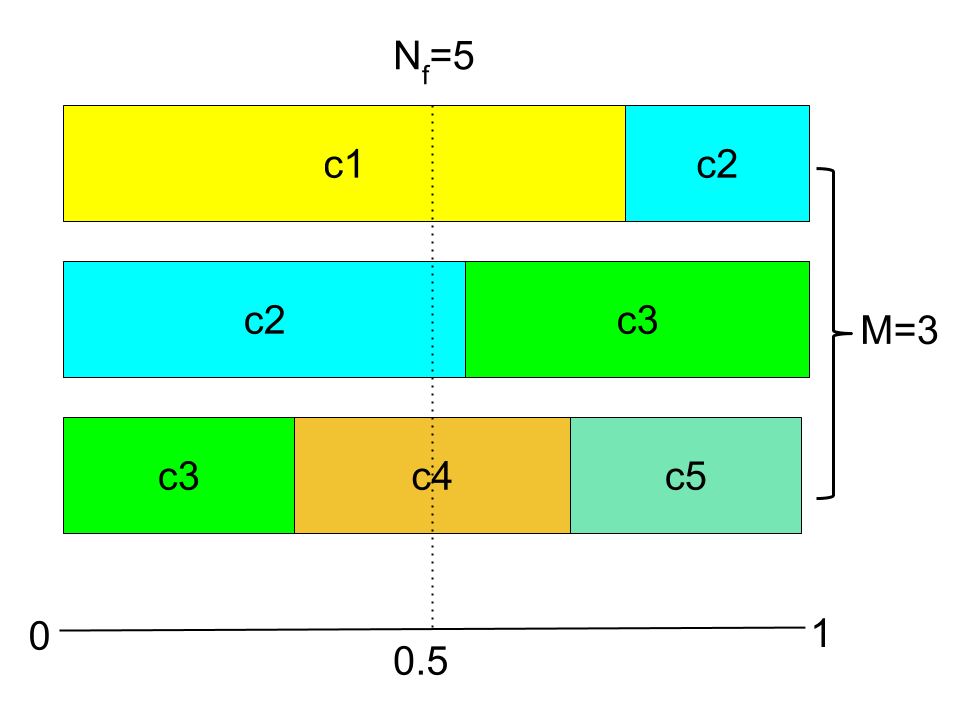}
\caption {The cache memory of size $M = 3$ is equally divided into $3$ blocks of unit size. Starting from content $i=1$ to $i=N_f$, each content sequentially fills these $3$ memory blocks by an amount $b_i$. The amounts (probabilities) $b_i$ eventually fill all $3$ blocks since $\sum_{i=1}^{N_f} b_i = M$ \cite{blaszczyszyn2015optimal}. Then a random number $\in [0,1]$ is generated, and a content $i$ is chosen from each block, whose $b_i$ fills the part intersecting with the generated random number. In this way, in the given example, the contents $\{1, 2, 4\}$ are chosen to be cached.
}
\label{prob_cache_example}
\end{figure}

If a device caches the desired file in its own memory, the device directly retrieves the content. However, if the device does not cache the desired file, it downloads the requested file from one neighboring device in the same cluster (that caches the desired file) via \ac{D2D} communication. Otherwise, the device attaches to the nearest BS as a last resort to download the content which is not cached entirely within the device's cluster. We assume that the \ac{D2D} communication is operating as out-of-band D2D. Also, it is assumed that all the \ac{D2D} transmitters transmit with the same power $P_d$, and all BSs transmit with the same power $P_b$. We assume that devices' requests in each cluster are served in a round-robin manner where only one active \ac{D2D} link can be scheduled within each cluster at a time. 

\section{PROBLEM FORMULATION} 

In this section, we formulate the energy consumption minimization problem for the clustered \ac{D2D} caching network. 
We define the cost $c_{d_i}$ as the time required to download a file $i$ from a neighboring device in the same cluster. Considering the file size $S_i$ of the $i$-th ranked content, the time $c_{d_i}$ is computed as $c_{d_i} = S_i/\overline{R}_d$, where $\overline{R}_d$ denotes the average rate of intra-cluster \ac{D2D} communication. Similarly, we have the time $c_{b_i} = S_i/\overline{R}_b$ when the $i$-th content is served by the BS with average rate $\overline{R}_b$. We assume that the library files have a mean size $\overline{S}$ Mbits.
Hence, the total consumed energy for all files downloaded within a given cluster can be written similar to \cite{energy_efficiency} as,
\begin{equation}
E = \sum_{j=1}^{n} \sum_{i=1}^{N_f}q_i \mathcal{P}_{i,j}^d P_d c_{d_i} + q_i \mathcal{P}_{i,j}^b P_b c_{b_i},
\label{energy}
\end{equation}
where $\mathcal{P}_{i,j}^d$ and $\mathcal{P}_{i,j}^b$ represent the probability of obtaining the $i$-th file by the $j$-th device from the local cluster, i.e., via \ac{D2D} communication, and the BS, respectively. Since we assume that the caching distribution is the same for all devices, the index $j$ is henceforth dropped. 
For the self-caching probability, it is clear that $\mathcal{P}_{i}^s=b_i$ since a device caches the $i$-th file with probability $b_i$.
 For the probability $\mathcal{P}_{i,j}^d$, we define two events:\\
\textbf{Event A:} The requesting device does not cache file $i$.
\\
\textbf{Event B:} The requested file is cached by any other device within the same cluster.
Then, the probability $\mathcal{P}_{i}^d$ represents the probability of intersection of these two events, written as
\begin{align}
\mathcal{P}_{i}^d =\mathbb{P}\big(A \cap B\big),
\label{zipf}
\end{align}
Since the two events $A$ and $B$ are independent, then 
\begin{align}
\mathcal{P}_{i}^d&=\big(1 - b_i\big)\big(1-(1-b_i)^{n-1}\big), \nonumber \\
&= \big(1 - b_i\big) -(1-b_i)^{n},
\label{d2d_prob}
 \end{align}
 where $(1 - b_i)$ is $\mathbb{P}(A)$, and $\big(1-(1-b_i)^{n-1}\big)$ is $\mathbb{P}(B)$. Therefore, we have,
 \begin{align}
\mathcal{P}_{i}^b &= (1-b_i)^{n}
\label{BS_prob}
\end{align}
In the following, the achievable rate analysis for the \ac{D2D} and BS communication is presented. These rates are then used to derive a  closed-form expression for the energy consumption in Section V.

\section{Average Achievable Rate}

A fixed rate transmission model is adopted in our study, where each TX (\ac{D2D} or BS) transmits at a fixed rate log$_2(1+\theta)$ bits/s/Hz, where $\theta$ is a design parameter. Since, the rate is fixed, the transmission is subject to outage due to fading and interference fluctuations. Consequently, the de facto transmissions rate (i.e., throughput) is given by 
\begin{equation}
\label{rate_eqn}
\overline{R}_m = \textrm{W$_{m}$log$_{2}$}(1+ \theta) p_{c},
\end{equation}
where W$_{m}$ is the allocated bandwidth for the communication link $m \in \{{\rm D2D},{\rm BS}\}$, $\theta$ is a pre-determined threshold for successful reception, and $p_c$ is the coverage probability. $p_c$ is defined as
the probability that the \ac{SINR} of the link of interest at the receiver exceeds the required threshold for successful demodulation and decoding, i.e.,
\begin{equation}
p_c = \mathbb{E}(\textbf{1} \{\sinr>\theta\}),
\end{equation}
where $\textbf{1}\{.\}$ is the indicator function.

\subsection{Average Achievable Rate for \ac{D2D} communication}
Following the methodology in \cite{andrews2011tractable}, we compute the \ac{D2D} coverage probability $p_{c_d}$.
Without loss of generality, we conduct the analysis for a cluster whose center is located at $x_0\in \Phi_p$ (referred to as representative cluster), and the device that requests the content (henceforth called typical device) is located at the origin.
We assume that the D2D-TX  that caches the requested file is located at $y_0$ w.r.t. $x_0$, where $x_0, y_0\in \mathbb{R}^2$. The distance from the D2D-TX to the typical device (D2D-RX of interest) is  denoted as $r=\lVert x_0+y_0\rVert$, which is a realization of a random variable $R$ whose distribution is described later. This setup is illustrated in Fig. \ref{distance}.
The received power at the D2D-RX of interest is expressed as
\begin{align}
P &= P_d  g_0 \lVert x_0+y_0\rVert^{-\alpha}= P_d  g_0 r^{-\alpha}			
\label{pwr}
\end{align}
where $g_0 \sim $ exp(1) is an exponential random variable which models Rayleigh fading and $\alpha > 2$ is the path loss exponent. 

It is assumed that there is always one active \ac{D2D} link per cluster. 
Similar to (\ref{pwr}), the interference from the simultaneously active D2D-TXs outside the representative cluster $x_0 \in \Phi_p$  at the typical device is expressed as
\begin{align}
I_{\Phi_p^{!}} &= \sum_{x\in \Phi_p \setminus\{x_0\}}  P_d g_{yx}  \lVert x+y\rVert^{-\alpha}\\
& =  \sum_{x\in \Phi_p \setminus\{x_0\}}  P_d g_{u}  u^{-\alpha}
\end{align}
where $y$ is the marginal distance from the potential interfering device in a given cluster to its cluster center at $x \in \Phi_p$.
$u = \lVert x+y\rVert$ is a realization of a random variable $U$ modeling the interfering distance (shown in Fig. \ref{distance}),
$g_{yx} \sim $ exp(1) are independently and identically distributed exponential random variables modeling Rayleigh fading, and $g_{u} = g_{yx}$ for ease of notation.

\begin{figure}[t]
\centering
\includegraphics[width=0.4\textwidth]{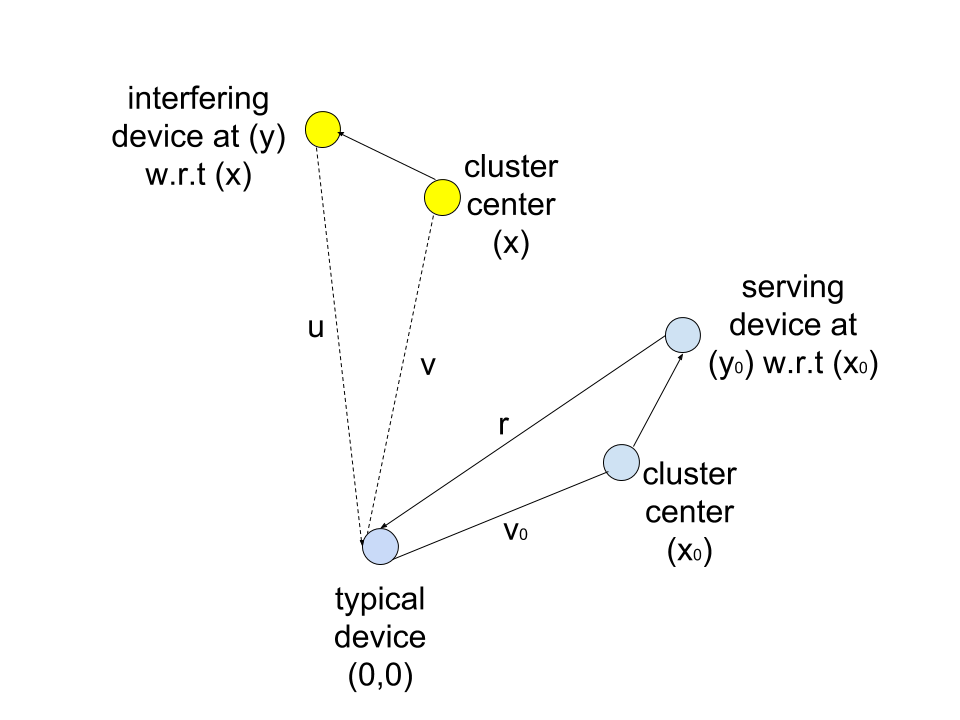}	
\caption {Illustration of the representative cluster and one interfering cluster.}
\label{distance}
\end{figure}

Assuming that the thermal noise is neglected compared to inter-cluster interference, the \ac{SIR} at the typical device is
\begin{equation}
\sir =  \frac{P }{I_{\Phi_p^{!}}} = \frac{P_d  g_0 r^{-\alpha}}{I_{\Phi_p^{!}} } 
\frac{}{}
\end{equation}
For the \ac{D2D} coverage probability, we proceed as follows,
\begin{align}
 p_{c_d} &= \mathbb{P} [\sir > \theta] = \mathbb{E}_R  \mathbb{P}\Big[\frac{g_0 P_d r^{-\alpha}}{I_{\Phi_p^{!}} }>\theta| R=r\Big]	 \nonumber \\
 &= \int_{r>0}\mathbb{P}\Big[\frac{g_0 P_d r^{-\alpha}}{I_{\Phi_p^{!}} }>\theta\Big] f_{R}(r){\rm dr}  \nonumber \\
 &= \int_{r>0}\mathbb{P}\Big[g_0> \theta r^{\alpha}{\frac{I_{\Phi_p^{!}}}{P_d} }\Big] f_{R}(r){\rm dr},
 \end{align}
 where $f_{R}(r)$ is the serving distance distribution. Since $g_0 \sim $ exp (1), we have
 \begin{align}
 p_{c_d} &= \int_{r>0}\mathbb{E}\Big[\text{exp}\big(\frac{-\theta r^{\alpha}}{P_d}{I_{\Phi_p^{!}} }\big)\Big]f_{R}(r){\rm dr}, \nonumber  \\
 \label{cov_eqn}
&\overset{(a)}{=} \int_{r>0}\mathscr{L}_{I_{\Phi_p^{!}}}\big(s=\frac{\theta r^{\alpha}}{P_d}\big)f_{R}(r){\rm dr},
 \end{align}
where (a) follows from the Laplace transform definition.
Now, we proceed with the derivation of the Laplace transform of the inter-cluster interference to obtain the \ac{D2D} coverage probability.
\begin{lemma}
\label{Ricepdf_lemma}
 Laplace transform of the inter-cluster interference can be expressed as
\begin{align}
\label{laplace_trans}
\mathscr{L}_{I_{\Phi_p^{!}}} (s)&= {\rm exp}\Big(-\pi\lambda_p (sP_d )^{2/\alpha} \Gamma(1 + 2/\alpha)\Gamma(1 - 2/\alpha)\Big)
\end{align}
\end{lemma}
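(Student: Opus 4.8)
The plan is to first simplify the interferer point set before computing any transform. Although $I_{\Phi_p^{!}}$ is shot noise from a Poisson cluster process, exactly one transmitter is active per interfering cluster, so the interferers are precisely the parent PPP $\Phi_p$ with each point $x$ independently translated by an offset $y\sim f_Y$ as in (\ref{pcp}). By the displacement theorem for Poisson processes (convolving the constant intensity $\lambda_p$ with the probability density $f_Y$ returns $\lambda_p$), these displaced points form a homogeneous PPP of intensity $\lambda_p$ on $\mathbb{R}^2$; and removing the single point $x_0$ does not alter the reduced Palm distribution by Slivnyak's theorem. Hence I may treat the interferers as a PPP $\Psi$ of intensity $\lambda_p$, with $I_{\Phi_p^{!}}\overset{d}{=}\sum_{z\in\Psi}P_d g_z\lVert z\rVert^{-\alpha}$ and i.i.d.\ $g_z\sim\exp(1)$.

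Second, I would apply the probability generating functional of $\Psi$ and average over the fading:
\[
\mathscr{L}_{I_{\Phi_p^{!}}}(s)=\mathbb{E}\Big[\prod_{z\in\Psi}\mathbb{E}_g\big[\exp(-sP_d g\lVert z\rVert^{-\alpha})\big]\Big]=\exp\Big(-\lambda_p\int_{\mathbb{R}^2}\big(1-\mathbb{E}_g\big[\exp(-sP_d g\lVert z\rVert^{-\alpha})\big]\big)\,{\rm d}z\Big).
\]
With $g\sim\exp(1)$, $\mathbb{E}_g[\exp(-sP_d g v^{-\alpha})]=(1+sP_d v^{-\alpha})^{-1}$, so $1-\mathbb{E}_g[\cdots]=\tfrac{sP_d v^{-\alpha}}{1+sP_d v^{-\alpha}}$; switching to polar coordinates gives
\[
\mathscr{L}_{I_{\Phi_p^{!}}}(s)=\exp\Big(-2\pi\lambda_p\int_0^\infty\frac{v}{1+v^{\alpha}/(sP_d)}\,{\rm d}v\Big).
\]
(Equivalently, one can avoid quoting the displacement theorem by keeping the $y$-average explicit and noting that $\int_{\mathbb{R}^2}h(\lVert x+y\rVert)\,{\rm d}x=\int_{\mathbb{R}^2}h(\lVert x\rVert)\,{\rm d}x$ by translation invariance of Lebesgue measure --- which is exactly why the cluster spread $\sigma^2$ never enters.)

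Finally, the radial integral is routine: substituting $t=v^{\alpha}/(sP_d)$ turns it into $\tfrac{(sP_d)^{2/\alpha}}{\alpha}\int_0^\infty\tfrac{t^{2/\alpha-1}}{1+t}\,{\rm d}t=\tfrac{(sP_d)^{2/\alpha}}{\alpha}B\big(\tfrac{2}{\alpha},1-\tfrac{2}{\alpha}\big)$, and then $B(a,1-a)=\Gamma(a)\Gamma(1-a)$ together with the recursion $\tfrac{2}{\alpha}\Gamma(\tfrac{2}{\alpha})=\Gamma(1+\tfrac{2}{\alpha})$ (which absorbs the $2\pi\lambda_p$ prefactor into $\pi\lambda_p$) collapses the exponent to $\pi\lambda_p(sP_d)^{2/\alpha}\Gamma(1+2/\alpha)\Gamma(1-2/\alpha)$, which is (\ref{laplace_trans}); the integral converges precisely because $\alpha>2$ controls the tail while the integrand stays bounded near $0$. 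I expect the only genuinely delicate point to be the first one --- recognizing that, with one active transmitter per cluster, the cluster geometry collapses to a bare PPP of intensity $\lambda_p$ (equivalently, that $\sigma^2$ drops out); everything after that is the textbook Rayleigh-faded Poisson shot-noise Laplace transform.
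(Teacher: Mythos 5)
Your proof is correct, but it takes a genuinely different route from the paper's Appendix A. You first collapse the interference field: since exactly one transmitter is active per interfering cluster, at an i.i.d. Gaussian offset from its parent, the interferers are an independently displaced copy of $\Phi_p\setminus\{x_0\}$, which by the displacement theorem (a constant intensity convolved with a probability density stays constant) together with Slivnyak's theorem is again a homogeneous PPP of intensity $\lambda_p$; from there the computation is the textbook Rayleigh-faded Poisson shot-noise Laplace transform, with the per-point fading average $(1+sP_d v^{-\alpha})^{-1}$, polar coordinates, and $B\big(\tfrac{2}{\alpha},1-\tfrac{2}{\alpha}\big)=\Gamma(\tfrac{2}{\alpha})\Gamma(1-\tfrac{2}{\alpha})$ giving the exponent $\pi\lambda_p(sP_d)^{2/\alpha}\Gamma(1+2/\alpha)\Gamma(1-2/\alpha)$; the convergence remark for $\alpha>2$ is also right. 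The paper instead applies the PGFL to the parent process $\Phi_p$ itself, conditions on the distance $v$ to each interfering cluster center, and works with the Rice conditional density $f_U(u|v)$ of the interferer distance, using $\int_0^\infty f_U(u|v)\,{\rm d}u=1$ and $\int_0^\infty f_U(u|v)\,v\,{\rm d}v=u$, then integrates by parts and averages the fading via $\mathbb{E}[g^{2/\alpha}]=\Gamma(1-2/\alpha)$. The two routes are equivalent --- your translation-invariance aside is precisely what the Rice-density identity encodes --- but yours is shorter, makes it immediately transparent why $\sigma$ and the cluster geometry drop out, and performs the fading average per interferer inside the PGFL, which is slightly cleaner than the paper's placement of $\mathbb{E}_{g_u}$ outside the spatial integral (itself justified by Tonelli). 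Notably, the paper's Remark 1 offers your displacement-theorem observation only as an after-the-fact explanation; you have promoted it to the proof itself.
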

\begin{proof}
Please see Appendix A.
\end{proof}
\begin{lemma}
\label{coverage_lemma}
 Substituting the obtained Laplace transform of the inter-cluster interference (\ref{laplace_trans}) into (\ref{cov_eqn}) yields the \ac{D2D} coverage probability
 \begin{align} 
 \label{v_0}
 p_{c_d} = \frac{1}{4\sigma^2 \mathcal{Z}(\theta,\alpha,\sigma)} , 
 \end{align}
 {\rm where $\mathcal{Z}(\theta,\alpha,\sigma)= (\pi\lambda_p \theta^{2/\alpha}\Gamma(1 + 2/\alpha)\Gamma(1 - 2/\alpha)+ \frac{1}{4\sigma^2})$}
\end{lemma}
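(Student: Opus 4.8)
The plan is to evaluate the single integral in (\ref{cov_eqn}) in closed form once the Laplace transform of Lemma~\ref{Ricepdf_lemma} and the serving-distance density $f_R(r)$ have been inserted. First I would set $s=\theta r^{\alpha}/P_d$ in (\ref{laplace_trans}); since $(sP_d)^{2/\alpha}=(\theta r^{\alpha})^{2/\alpha}=\theta^{2/\alpha}r^{2}$, the Laplace transform collapses to a single Gaussian factor in the serving distance,
\[
\mathscr{L}_{I_{\Phi_p^{!}}}\!\Big(\tfrac{\theta r^{\alpha}}{P_d}\Big)=\exp\!\big(-\kappa\,r^{2}\big),\qquad \kappa:=\pi\lambda_p\,\theta^{2/\alpha}\,\Gamma(1+2/\alpha)\,\Gamma(1-2/\alpha).
\]

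The second ingredient is $f_R(r)$, the density of the distance $R=\lVert x_0+y_0\rVert$ from the typical device (at the origin) to the serving D2D-TX. Both nodes are members of the same cluster centered at $x_0$, so relative to that center their positions are independent and each follows the Gaussian law (\ref{pcp}) of per-coordinate variance $\sigma^{2}$; the round-robin scheduling makes the serving device a uniformly chosen cluster member rather than the nearest cacher, so no order statistics enter. Hence the displacement vector $x_0+y_0$ joining the two nodes is a zero-mean Gaussian with per-coordinate variance $2\sigma^{2}$, and its norm is Rayleigh:
\[
f_R(r)=\frac{r}{2\sigma^{2}}\exp\!\Big(-\frac{r^{2}}{4\sigma^{2}}\Big),\qquad r\ge 0.
\]
I expect this step — justifying that convolving the two cluster-member densities doubles the variance — to be the only part needing genuine care; everything after it is a standard Gaussian integral.

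Finally I would combine the two pieces in (\ref{cov_eqn}):
\[
p_{c_d}=\int_{0}^{\infty}\exp\!\big(-\kappa r^{2}\big)\,\frac{r}{2\sigma^{2}}\exp\!\Big(-\frac{r^{2}}{4\sigma^{2}}\Big)\,\mathrm{d}r=\frac{1}{2\sigma^{2}}\int_{0}^{\infty}r\,\exp\!\Big(-\big(\kappa+\tfrac{1}{4\sigma^{2}}\big)r^{2}\Big)\,\mathrm{d}r.
\]
Using $\int_{0}^{\infty}r\,e^{-ar^{2}}\,\mathrm{d}r=1/(2a)$ with $a=\kappa+\tfrac{1}{4\sigma^{2}}$ gives $p_{c_d}=\big[4\sigma^{2}(\kappa+\tfrac{1}{4\sigma^{2}})\big]^{-1}$, and since $\kappa+\tfrac{1}{4\sigma^{2}}$ is exactly $\mathcal{Z}(\theta,\alpha,\sigma)$, this is (\ref{v_0}). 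As a sanity check, $4\sigma^{2}\mathcal{Z}=4\sigma^{2}\kappa+1\to 1$ as $\sigma\to 0$, so $p_{c_d}\to 1$ for a degenerate (co-located) cluster, which is the expected behavior.
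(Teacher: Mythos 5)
Your proposal is correct and follows essentially the same route as the paper: substitute $s=\theta r^{\alpha}/P_d$ so that $(sP_d)^{2/\alpha}=\theta^{2/\alpha}r^2$, use the fact that the serving distance is Rayleigh with scale $\sqrt{2}\sigma$ (difference of two independent per-coordinate-variance-$\sigma^2$ Gaussian displacements), and evaluate the resulting elementary integral $\int_0^\infty r\,e^{-ar^2}\mathrm{d}r=1/(2a)$. Your extra justification of the doubled variance and the sanity check as $\sigma\to 0$ are nice touches, but nothing differs in substance from the paper's Appendix B.
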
			
\begin{proof}
Please see Appendix B.
\end{proof}
\begin{remark} {\rm The Laplace transform of interference of the \ac{PCP} with one active \ac{D2D} link per cluster in (\ref{laplace_trans}) is similar to that of the \ac{PPP}   \cite{andrews2011tractable}. This can be explained with the displacement theory of the \ac{PPP}  \cite{daley2007introduction}, as each interferer is a point of a \ac{PPP}  that is displaced randomly and independently of all other points.}
\end{remark} 
\begin{remark} {\rm 
The obtained \ac{D2D} coverage probability in (\ref{v_0}) for the \ac{PCP} with one active \ac{D2D} link per cluster shows two important insights when compared to an equivalent PPP. Firstly, it depends on the density of the parent \ac{PPP}  $\lambda_p$ which directly affects the interfering distances. However, in the PPP, the effect of device density does not exist since it equally affects both the serving and interfering distances when a device associates to its nearest serving device. Secondly, it  depends on the serving distance which is represented by the displacement standard deviation $\sigma$.}
\end{remark}

\subsection{Average Achievable Rate for Cellular Communication}
For the assumed \ac{PPP}  model for the BSs' distribution, and based on the nearest BS association principle, it is shown in \cite{ganti2016asymptotics} that the BS coverage probability can be expressed as
\begin{equation}
p_{c_b} =\frac{1}{{}_2 F_1(1,-\delta;1-\delta;-\theta)},
\label{p_b_bs}
\end{equation}
where ${}_2 F_1(.)$ is the Gaussian hypergeometric function and $\delta = 2/\alpha$. By substituting the obtained coverage probabilities from (\ref{v_0}) and (\ref{p_b_bs}) into (\ref{rate_eqn}), $\overline{R}_d$ and $\overline{R}_b$ are derived.

\section{Energy Consumption Minimization}
The energy minimization problem is formulated as
\begin{align}
\label{optimize_eqn1}
&\underset{b_i}{\text{min}} \quad E = n\sum_{i=1}^{N_f} q_i\mathcal{P}_{i}^d P_d \frac{S_i}{\overline{R}_d} + q_i \mathcal{P}_{i}^b P_b \frac{S_i}{\overline{R}_b} \\
\label{const11}
&\textrm{s.t.}\quad  \sum_{i=1}^{N_f} b_i = M, \quad\quad b_i \in [ 0, 1],
\end{align}	
where (\ref{const11}) is the device cache size constraint to avoid duplicate caching of the same file (cf. Fig. \ref{prob_cache_example}). It is easy to show the convexity of the objective function $E$ by confirming that the Hessian matrix of $E$ w.r.t. the caching variables, \textbf{H}$_{i,j} = \frac{\partial^2 E}{\partial b_i \partial b_j}$, $\forall i,j \in \mathcal{F}$, is positive semidefinite (and hence $E$ is convex).
Utilizing the convexity of $E$, the optimal caching distribution is computed as follows.
\begin{lemma}
The optimal caching distribution $b_i^{*}$ is given by
\begin{align}
b_i^* = \Bigg[ 1 - \Big(\frac{v^* + k^2q_iS_i\frac{P_d}{\overline{R}_d }}{kq_iS_i\big(\frac{P_d}{\overline{R}_d }-\frac{P_b}{\overline{R}_b }\big)} \Big)^{\frac{1}{n-1}}\Bigg]^{+}
\end{align} 
 where $v^{*}$ satisfies the maximum cache constraint $\sum_{i=1}^{N_f} \Big[ 1 - \Big(\frac{v^* + k^2q_iS_i\frac{P_d}{\overline{R}_d }}{kq_iS_i\big(\frac{P_d}{\overline{R}_d }-\frac{P_b}{\overline{R}_b }\big)} \Big)^{\frac{1}{n-1}}\Big]^{+}=M$, and $[x]^+ =$ max$(x,0)$.
\end{lemma}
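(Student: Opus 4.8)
The plan is to treat (\ref{optimize_eqn1})--(\ref{const11}) as a convex program and characterize its minimizer through the Karush--Kuhn--Tucker (KKT) conditions, which are both necessary and sufficient here since $E$ was already shown to be convex and the feasible set $\{\,b:\sum_i b_i=M,\ 0\le b_i\le 1\,\}$ is a convex polytope. First I would insert the closed forms $\mathcal{P}_i^d=(1-b_i)-(1-b_i)^n$ and $\mathcal{P}_i^b=(1-b_i)^n$ from (\ref{d2d_prob})--(\ref{BS_prob}) into the objective, which makes $E$ a separable sum over $i$ of terms of the form $k\,q_iS_i\big[\tfrac{P_d}{\overline{R}_d}(1-b_i)+\big(\tfrac{P_b}{\overline{R}_b}-\tfrac{P_d}{\overline{R}_d}\big)(1-b_i)^n\big]$, with $k$ the common prefactor carried by (\ref{optimize_eqn1}). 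I would then form the Lagrangian $L=E+v\big(\sum_i b_i-M\big)-\sum_i\mu_i b_i$, attaching a multiplier $v$ (of arbitrary sign) to the cache-size equality and multipliers $\mu_i\ge 0$ to the non-negativity constraints; I will check afterwards that the constraints $b_i\le 1$ are never active, so their multipliers can be dropped.

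Next comes the stationarity condition $\partial E/\partial b_i+v-\mu_i=0$. Differentiating the separable objective gives $\partial E/\partial b_i=k\,q_iS_i\big[-\tfrac{P_d}{\overline{R}_d}+n\big(\tfrac{P_d}{\overline{R}_d}-\tfrac{P_b}{\overline{R}_b}\big)(1-b_i)^{n-1}\big]$. For every file with $b_i^*>0$, complementary slackness forces $\mu_i=0$; solving the resulting equation for $(1-b_i)^{n-1}$ and taking the $(n-1)$-th root yields precisely the interior expression in the statement. For files for which that expression is non-positive, the constraint $b_i\ge 0$ is active, so $b_i^*=0$ and $\mu_i=\partial E/\partial b_i\big|_{b_i=0}+v\ge 0$ closes the KKT system; this dichotomy is exactly what the operator $[\,\cdot\,]^+$ encodes. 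Finally, $v^*$ is determined by substituting $b_i^*(v)$ back into the active constraint $\sum_i b_i^*=M$: since each $b_i^*(v)$ is continuous and monotone in $v$, the left-hand side is monotone, so this equation has a unique root $v^*$.

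I expect the delicate points to be, in order: (i) checking that the ratio inside the $(n-1)$-th root is non-negative on the relevant range, so that the real root — and hence a value $b_i^*\in[0,1]$ — is well defined, which relies on the sign of $\tfrac{P_d}{\overline{R}_d}-\tfrac{P_b}{\overline{R}_b}$ together with that of $v^*$; (ii) justifying that the upper bounds $b_i\le 1$ never bind, which is immediate once (i) holds because then $1-(\cdot)^{1/(n-1)}\le 1$; and (iii) the complementary-slackness bookkeeping that collapses the full KKT system into the clean projected form. The main obstacle is really (iii) — carrying out the case split between the files sitting at the lower boundary and those in the interior cleanly — but this is a standard water-filling-type argument once the separable structure of $E$ has been made explicit.
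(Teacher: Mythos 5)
The paper never actually supplies a proof of this lemma (it is explicitly deferred to the journal version), so there is nothing to conflict with; your KKT route is the standard and almost certainly the intended argument: substitute $\mathcal{P}_i^d=(1-b_i)-(1-b_i)^n$ and $\mathcal{P}_i^b=(1-b_i)^n$ to obtain a separable convex objective, form the Lagrangian with a multiplier $v$ on $\sum_i b_i=M$ and $\mu_i\ge 0$ on $b_i\ge 0$, solve stationarity for $(1-b_i)^{n-1}$, and fix $v^*$ from the cache-size constraint. Your derivative computation is correct and the projected form $b_i^*=\big[1-(\cdot)^{1/(n-1)}\big]^+$ with a monotone bisection for $v^*$ reproduces the structure of the lemma.

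Two points need attention, however. First, with your reading of $k$ as the common prefactor $n$ in (17), stationarity gives $(1-b_i)^{n-1}=\dfrac{k\,q_iS_i\frac{P_d}{\overline{R}_d}-v}{k^2 q_iS_i\big(\frac{P_d}{\overline{R}_d}-\frac{P_b}{\overline{R}_b}\big)}$, i.e.\ the single power of $k$ multiplies the $q_iS_iP_d/\overline{R}_d$ term in the numerator and $k^2$ sits in the denominator --- the reverse of the printed expression (the sign in front of $v$ is only a convention). Since the conference paper never defines $k$, this cannot be adjudicated here, but you should define your $k$ explicitly and flag the mismatch rather than assert that your formula ``is precisely the interior expression in the statement.'' Second, your step (ii) --- that the upper bounds $b_i\le 1$ never bind --- is asserted, not proved, and it is false in general: $E$ is strictly decreasing in every $b_i$, so for sufficiently skewed popularity or large $M$ the KKT system places the most popular files at $b_i^*=1$; in that regime $v^*$ falls below $k\,q_iS_iP_d/\overline{R}_d$ for those files, the quantity inside the $(n-1)$-th root becomes negative, and the formula without a $\min\{\cdot,1\}$ clip is not even well defined. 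The paper's statement glosses over the same issue, but a complete proof must either include the clip at $1$ (the usual water-filling form in the probabilistic-caching literature) or state the condition on $v^*$ under which all upper-bound multipliers vanish; your (i)--(ii) reasoning does not establish that condition.
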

\begin{proof}
Due to lack of space, we delegate this proof to the journal version of this paper \cite{joint-cache-comm}.
\end{proof}
\section{Numerical Results}

\begin{table}[ht]
\caption{Simulation Parameters} 
\centering 
\begin{tabular}{c c  c} 
\hline\hline 
Description & Parameter & Value  \\ [0.5ex] 
\hline 
\ac{BS}-to-Device bandwidth & W$_{bs}$ & \SI{20}{\mega\hertz}  \\ 
\ac{D2D} bandwidth & W$_{d2d}$ & \SI{20}{\mega\hertz}  \\
\ac{BS} transmission power & $P_b$ & \SI{43}{\deci\bel\of{m}}  \\
\ac{D2D} transmission power & $P_d$ & \SI{23}{\deci\bel\of{m}}  \\
Displacement standard deviation & $\sigma$ & \SI{10}{\metre} \\ 
Popularity index&$\beta$&1\\
Path loss exponent&$\alpha$&4\\
Library size&$N_f$&500 files\\
Cache size per device&$M$&10 files\\
Devices per cluster&$n$&10\\
Density of $\Phi_p$&$\lambda_{p}$&50 clusters/\SI{}{km}$^2$ \\
Average content size&$\overline{S}$&\SI{100}{Mbits} \\		
$\sir$ threshold&$\theta$&\SI{0}{\deci\bel}\\
\hline 
\end{tabular}
\label{table:sim-parameter} 
\end{table}

At first, we validate the developed mathematical model via Monte Carlo simulations. Then we benchmark the proposed caching scheme against conventional caching schemes. Unless otherwise stated, the network parameters are selected as shown in Table \ref{table:sim-parameter}. 
In Fig.~\ref{verify_energy}, we verify the accuracy of the analytical results for the \ac{D2D} coverage probability. 
The close matching between the analytical and simulated results validates the developed mathematical model. 
We see that the coverage probability monotonically decreases with $\sigma$. As $\sigma$ increases, the serving distance increases and the distance between the interferers and the typical device decreases, and equivalently, the \ac{CCDF} of \ac{SIR} decreases. Similarly, it is observed that the \ac{D2D} coverage probability decreases with $\theta$ owing to the decreasing \ac{CCDF} of \ac{SIR}.

Fig.~\ref{energy_vs_beta} shows the normalized energy consumption per device versus $\beta$ under different caching schemes, namely, proposed \ac{PC}, \ac{RC}, \ac{CPF}, and Zipf's caching (Zipf). We can see that the minimized  consumed energy under \ac{PC} scheme attains the best performance as compared to other schemes. Also, it is clear that, except for the \ac{RC}, the consumed energy decreases with $\beta$. 
This can be justified by the fact that as $\beta$ increases, fewer files are frequently requested which are more likely to be cached under \ac{PC}, \ac{CPF}, and the Zipf's caching schemes. In the \ac{RC} scheme, files are uniformly chosen for caching independently of their popularity. 

We plot the average device energy consumption per device versus number of devices per cluster in Fig.~\ref{energy_vs_n}. First, we see that the normalized energy consumption decreases with the number of devices. As the number of devices per cluster increases, it is more probable to obtain requested files via low power \ac{D2D} communication. When the number of devices per cluster is relatively large, the normalized energy consumption tends to flatten as most of the content becomes cached at the cluster devices. 

\begin{figure}[t]
\centering
\includegraphics[width=0.40\textwidth]{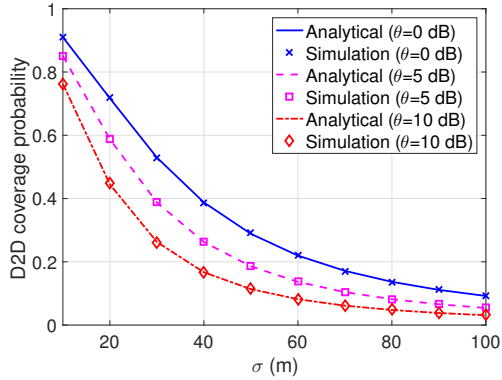}					
\caption {\ac{D2D} coverage probability versus displacement standard deviation $\sigma$. }
\label{verify_energy}
\end{figure}

\begin{figure}[htbp]	
\centering
\includegraphics[width=0.40\textwidth]{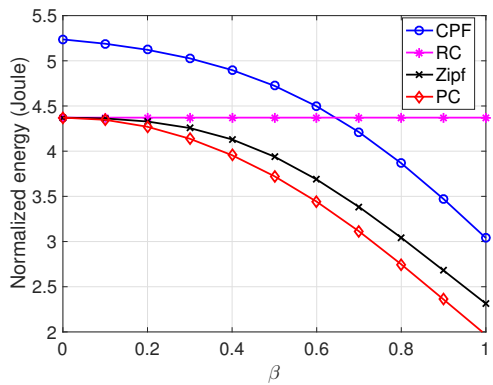}		
\caption {Normalized energy consumption versus popularity exponent $\beta$.} 
\label{energy_vs_beta}
\end{figure}

\section{Conclusion}
In this work, we formulate and solve an optimization problem to minimize the energy consumption of a clustered \ac{D2D} network with random \ac{PC} incorporated at the devices. Using tools from stochastic geometry, we get closed form expressions of the coverage probability and transmission rates, encompassing the underlying physical layer parameters, e.g. mutual interference and \ac{SIR} distribution.  
We show the convexity of the formulated energy minimization problem and obtain a closed form solution for the file placement probabilities.
Results reveal that the proposed \ac{PC} scheme significantly reduce the consumed energy in the network when  compared to conventional methods, e.g., \ac{CPF}, \ac{RC} and Zipf's caching. 
\begin{figure}[t]
\centering
\includegraphics[width=0.40\textwidth]{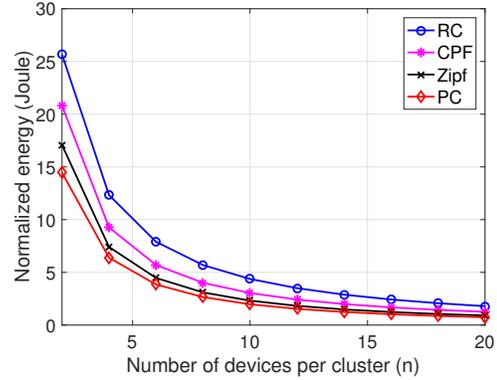}			
\caption {Normalized energy consumption versus the number of devices per cluster.} 
\label{energy_vs_n}
\end{figure}

\begin{appendices}
\section{Proof of lemma 1}
Laplace transform of the inter-cluster aggregate interference $I_{\Phi_p^{!}}$ is expressed as
\begin{align}
 &\mathscr{L}_{I_{\Phi_p^{!}}}(s) = \mathbb{E}_{\Phi_p, g_{u}} \Bigg[e^{-s P_d \sum_{\Phi_p \setminus\{x_0\}}  g_{u}  u^{-\alpha}} \Bigg] \nonumber \\
 &\overset{(a)}{=} \mathbb{E}_{\Phi_p} \Bigg[  \mathbb{E}_{g_{u}} \Big[ \prod_{x \in \Phi_p \setminus\{x_0\}}  e^{-s  P_d g_{u}  u^{-\alpha}} \Big] \Bigg] \nonumber \\
&\overset{(b)}{=}  \text{exp}\Bigg(-2\pi\lambda_p \mathbb{E}_{g_{u}} \int_{v=0}^{\infty}\mathbb{E}_{u|v}\Big[1 -
 e^{-s  P_d g_{u}  u^{-\alpha}}  \Big]v\dd{v}\Bigg), 
\end{align}
where (a) follows from the independence between devices' locations and the channel gains, (b) follows from the probability generating functional of the \ac{PPP}  \cite{andrews2011tractable}. It is proven in \cite{clustered_twc} that the interfering distance $u$, conditioned on the distance
$v=\lVert x \rVert$ between the interfering cluster center $x \in \Phi_p$ and the typical device is given by $f_U(u|v) = \mathrm{Rice} (u| v, \sigma)$, which represents the Rice's probability density function of parameter $\sigma$. By averaging over $u$, the Laplace transform of  inter-cluster interference can be expressed as
\begin{align}
 &\mathscr{L}_{I_{\Phi_p^{!}}}(s) = \text{exp}\Big(-2\pi\lambda_p \mathbb{E}_{g_{u}} \big[\int_{v=0}^{\infty}\int_{u=0}^{\infty}\big(1 - e^{-s  P_d g_{u}  u^{-\alpha}}  \big) \nonumber \\
 & \quad \quad	\quad\quad \quad \quad\quad\quad\quad \quad\quad\quad\quad\quad\quad \quad	\cdot 
 f_U(u|v)\dd{u}v\dd{v}\big]\Big)   \nonumber \\
 \label{prelaplace}
&\overset{(a)}{=} \text{exp}\Bigg(-2\pi\lambda_p\cdot \nonumber \\
&\Eb_{g_{u}} \underbrace{\int_{v=0}^{\infty}v\dd{v} - \int_{v=0}^{\infty}\int_{u=0}^{\infty}  e^{-s  P_d g_{u}  u^{-\alpha}} f_{U}(u|v)\dd{u} v \dd{v}}_{\mathcal{R}(s,\alpha)}\Bigg) 
 \end{align}
where (a) follows from $\int_{u=0}^{\infty} f_{U}(u|v)\dd{u} =1$. Now, we proceed by calculating the integrands of $\mathcal{R}(s,\alpha)$ as follows.
\begin{align}
\mathcal{R}(s,\alpha)&\overset{(a)}{=} \int_{v=0}^{\infty}v\dd{v} - \int_{u=0}^{\infty}e^{-s  P_d g_{u}  u^{-\alpha}}\cdot \nonumber \\
&\int_{v=0}^{\infty} f_{U}(u|v)v \dd{v}\dd{u}\nonumber \\
&\overset{(b)}{=} \int_{v=0}^{\infty}v\dd{v} - \int_{u=0}^{\infty}e^{-s  P_d g_{u}  u^{-\alpha}}u\dd{u} \nonumber \\
&\overset{(c)}{=}   \int_{u=0}^{\infty}(1 - e^{-s  P_d g_{u}  u^{-\alpha}})u\dd{u} \nonumber \\
&\overset{(d)}{=}   \frac{(s P_d g_{u})^{2/\alpha}}{\alpha} \int_{u=0}^{\infty}(1 - e^{-t})t^{-1-\frac{2}{\alpha}}\dd{u}  \nonumber \\
 \label{laplaceppp1}
 &\overset{(e)}{=} \frac{(s P_d)^{2/\alpha}}{2} g_{u}^{2/\alpha} \Gamma(1 + 2/\alpha),	 	
 \end{align}
 where (a) follows from changing the order of integration, (b) follows from $ \int_{v=0}^{\infty} f_{U}(u|v)v\dd{v} = u$, (c) follows from changing the dummy variable $u$ to $v$, (d) follows from changing the variables $t=s g_{u}u^{-\alpha}$, and (e) follows from solving the integration of (d) by parts. Substituting the obtained value for $\mathcal{R}(s,\alpha)$ into (\ref{prelaplace}), and taking the expectation over the exponential random variable $g_u$, with the fact that $\Eb_{g_{u}} g_{u}^{2/\alpha} = \Gamma(1 - 2/\alpha)$, the lemma is proven.
\section{Proof of lemma 2}
Since the serving and typical devices' locations w.r.t. the cluster center $x_0$ are sampled from normal distribution with variance $\sigma^2$, the serving distance $r$ is Rayleigh distributed with probability density function $f_R(r)=\mathrm{Rayleigh}(r,\sqrt{2}\sigma)$ with a scale parameter $\sqrt{2}\sigma$. Substituting the obtained Laplace transform of (\ref{laplace_trans}) and the probability density function $f_R(r)$ into (\ref{cov_eqn}) yields, 
 \begin{align} 
 p_{c_d} &=\int_{r=0}^{\infty}
  {\rm e}^{-\pi\lambda_p (sP_d)^{2/\alpha} \Gamma(1 + 2/\alpha)\Gamma(1 - 2/\alpha)}\frac{r}{2\sigma^2}{\rm e}^{\frac{-r^2}{4\sigma^2}}  {\rm dr} , \nonumber \\
  &\overset{(a)}{=}\int_{r=0}^{\infty} \frac{r}{2\sigma^2} 
   {\rm e}^{-\pi\lambda_p \theta^{2/\alpha}r^{2} \Gamma(1 + 2/\alpha)\Gamma(1 - 2/\alpha)}{\rm e}^{\frac{-r^2}{4\sigma^2}}  {\rm dr} , \nonumber \\
   &\overset{(b)}{=}\int_{r=0}^{\infty} \frac{r}{2\sigma^2} 
   {\rm e}^{-r^2\mathcal{Z}(\theta,\sigma,\alpha)}  {\rm dr} , \nonumber \\
   &= \frac{1}{4\sigma^2 \mathcal{Z}(\theta,\alpha,\sigma)}
 \end{align}
 where (a) comes from the substitution $s = \frac{\theta r^{\alpha}}{P_d}$, and (b) from $\mathcal{Z}(\theta,\alpha,\sigma)= (\pi\lambda_p \theta^{2/\alpha}\Gamma(1 + 2/\alpha)\Gamma(1 - 2/\alpha)+ \frac{1}{4\sigma^2})$.
 
\end{appendices}

\bibliographystyle{IEEEtran}
\bibliography{bibliography}
\end{document}